\newcommand{\ifr}{\mbox{\sf if}}
\newcommand{\tr}{\mbox{\sf true}}
\newcommand{\fa}{\mbox{\sf false}}
\newcommand{\fib}{\mbox{\sf fib}}
\newcommand{\ffact}{\mbox{\sf fact}}
\newcommand{\pl}{{\sf plus}}
\newcommand{\ml}{{\sf mult}}
\newcommand{\az}{{\sf .0}}
\newcommand{\ao}{{\sf .1}}
\newcommand{\fs}{{\sf s}}
\newcommand{\fz}{{\sf 0}}
\newcommand{\fo}{{\sf 1}}
\newcommand{\ff}{{\sf f}}
\newcommand{\fg}{{\sf g}}
\newcommand{\scp}{{\sf succ}}
\newcommand{\plp}{{\sf plus}}
\newcommand{\mlp}{{\sf mult}}
\newenvironment{proof}{Proof.}{\hfill $\Box$

\vspace{3mm}

}
\newtheorem{thm}{Theorem}
\newtheorem{lem}[thm]{Lemma}
\title{Strategy Independent Reduction Lengths in Rewriting and
Binary Arithmetic}
\author{Hans Zantema
    \institute{University of Technology, Eindhoven, The Netherlands}
    \institute{Radboud University, Nijmegen, The Netherlands}
    \email{h.zantema@tue.nl}
}
\begin{document}
\maketitle

\begin{abstract}
In this paper we give a criterion by which one can conclude that every
reduction of a basic term to normal form has the same length. As a
consequence, the number of steps to reach the normal form is independent
of the chosen strategy. In particular this holds for TRSs computing
addition and
multiplication of natural numbers, both in unary and binary notation.
\end{abstract}

\section{Introduction}
\label{sect:introduction}

For many term rewriting systems (TRSs) the number of rewrite steps to
reach a normal form strongly depends on the chosen strategy. For
instance, in using the standard if-then-else-rules
\[ 
\ifr(\tr,x,y) \to  x \;\;\;\; \ifr(\fa,x,y) \to y  \]
it is a good strategy to first rewrite the leftmost (boolean) argument 
of the symbol
$\ifr$ until this argument is rewritten to $\fa$ or $\tr$ and then
apply the corresponding rewrite rule for $\ifr$. In this way redundant
reductions in the third argument are avoided in case the condition
rewrites to $\tr$, and redundant reductions in the second argument are 
avoided in case the condition rewrites to $\fa$. More general, choosing a
good reduction strategy is essential for doing efficient computation by
rewriting. Roughly speaking, for erasing rules, that is, some 
variable in
the left-hand side does not appear in the right-hand side, it
seems a good strategy to postpone rewriting this possibly
erasing argument, as is the case in the above if-then-else
example. Conversely, in case of duplicating rules, that is, a
variable occurs more often in the right-hand side than in
the left-hand side, it seems a good strategy to first rewrite
the corresponding argument before duplicating it.

Surprisingly however, there are practical examples of TRSs
including both such erasing and duplicating rules, where the
strategy has no influence at all on the number steps required to
reach a normal form. In this paper we investigate criteria
for this phenomenon. As an example, using the results of this
paper, we will show that for multiplying two natural numbers by
the standard TRS
\[ \begin{array}{rclrcl}
\pl(\fz,x) & \to & x &
\ml(\fz,x) & \to & \fz \\
\pl(\fs(x),y) & \to & \fs(\pl(x,y)) &
\ml(\fs(x),y) & \to & \pl(\ml(x,y),y) \end{array} \]
the number of steps to reach the resulting normal form is
independent of the chosen strategy. Note that the third rule of
this TRS is erasing, and the last rule is duplicating.

In this TRS ground terms reduce to ground normal forms of the
shape $\fs^n(\fz)$ for any natural number $n$, representing the
corresponding number $n$ in unary notation. For large numbers $n$ 
this is a quite inefficient representation. Much more efficient
is the binary representation. In this paper we also give a TRS
for doing addition and multiplication in binary representation.
Here for every addition or multiplication of two
positive integer numbers we show by the main theorem of this paper
that every reduction to normal form has the same number of steps.
Moreover, this number 
has the same complexity as the standard binary algorithms:
linear for addition and quadratic for  multiplication. 

Throughout the paper we assume familiarity with
standard notions in rewriting like orthogonality and normal
forms as they are introduced in e.g. \cite{terese}. For instance,
a {\em reduction} is a sequence of rewrite steps, and a {\em
reduct} of a term $t$ is a term $u$ for which there is a reduction
from $t$ to $u$.

In Section \ref{secdiam} we investigate the diamond property and
present our main result in Theorem \ref{thmmain}: a criterion  
by which for every basic term all reductions to normal form have
the same length. Here a term is called basic if only its root is a
defined symbol. We apply this to unary arithmetic. In Section
\ref{secbin} it is shown how our theorem applies to binary
arithmetic. We conclude in Section \ref{secconcl}.

\section{The diamond property and the main result}
\label{secdiam}

We say that a binary relation $\to$ satisfies the {\em diamond
property}, 
if for every three elements
$s,t,u$ satisfying $s \to t$ and $s \to u$ and $t \neq u$, there
exists an element $w$ such that $t \to w$ and $u \to w$.

This notion is slightly different from other variants of the
diamond property, for instance introduced in \cite{terese} where
it is typically used for reflexive relations.
However, as we are interested in the exact number of steps to
reach a normal form, our present version is the most natural in our context.
In earlier texts it was sometimes called WCR$^1$.
It is an instance of the {\em balanced weak Church-Rosser property}
from \cite{T05}.  The following lemma is a direct consequence
of both Lemma 1 from \cite{T05} and the random descent lemma from \cite{O07}; 
to make the paper self-contained here we give a proof too.  

%
%
%
%

\begin{lem}
\label{cordiam}
Let $\to$ be a relation satisfying the diamond property. Then
every element has at most one normal form, and for every element
having a normal form it holds that every reduction
of this element to its normal form has the same number of steps.
\end{lem}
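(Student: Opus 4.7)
The plan is to prove a single statement by induction on the length $k$ of a reduction to normal form: if $a$ admits a reduction to some normal form $n$ of length $k$, then every reduction from $a$ to a normal form has length exactly $k$ and ends at $n$. Both claims of the lemma (uniqueness of the normal form and equal length of all reductions to it) are immediate consequences.

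The base case $k=0$ is trivial, since $a=n$ is already a normal form. For the induction step, fix a reduction $a \to a_1 \to \cdots \to n$ of length $k$ and consider an arbitrary reduction $a \to b_1 \to \cdots \to n'$ of length $m$ to a normal form $n'$. If $a_1 = b_1$, then the induction hypothesis applied at $a_1$ (which has a reduction of length $k-1$ to a normal form) forces $m-1 = k-1$ and $n' = n$. Otherwise $a_1 \neq b_1$, and the diamond property supplies a $w$ with $a_1 \to w$ and $b_1 \to w$. The induction hypothesis at $a_1$ then yields that every reduction from $a_1$ to a normal form has length $k-1$ and ends at $n$; in particular $w$ reduces to $n$ in exactly $k-2$ steps. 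Consequently $b_1 \to w \to \cdots \to n$ is a reduction of $b_1$ of length $k-1$ to a normal form, and a second application of the induction hypothesis, this time at $b_1$, gives $m-1 = k-1$ and $n' = n$.

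The one subtle point is the edge case $k=1$ in the second branch: there $a_1 = n$ is already a normal form, and the diamond property would demand a step $a_1 \to w$, which is impossible. Hence this branch simply cannot occur when $k=1$, and the induction goes through without issue. I expect this small sanity check, rather than any combinatorial difficulty, to be the only place where care is needed; the rest is a routine two-case analysis driven entirely by the diamond property.
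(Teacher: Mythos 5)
Your overall strategy --- induction on the length $k$ of one fixed reduction to normal form, with a two-case analysis on whether the first steps coincide --- is the same as the paper's, and your treatment of the $k=1$ edge case (the diamond property itself makes the divergent branch vacuous) is correct. But there is a genuine gap in the second branch. Your induction hypothesis only quantifies over \emph{reductions from $a_1$ that reach a normal form}. From ``every reduction from $a_1$ to a normal form has length $k-1$ and ends at $n$'' you cannot conclude ``$w$ reduces to $n$ in $k-2$ steps'': that inference needs the additional fact that $w$ reaches \emph{some} normal form at all, so that $a_1 \to w \to^* (\mbox{normal form})$ is a complete reduction to which the hypothesis applies. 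Nothing in your induction statement supplies this. In general an element can perfectly well have a normal form while one of its reducts does not (take $a \to n$, $a \to b$, $b \to b$); it is precisely the diamond property that excludes this, and excluding it is part of what has to be proven --- so assuming it at this point is circular.

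The repair is to strengthen the statement you induct on, which is exactly what the paper does: if $a \to^k n$ with $n$ a normal form, then \emph{every} reduct $u$ of $a$, reached in $p$ steps, satisfies $u \to^{k-p} n$ --- not merely every reduct that happens to lie on a complete reduction. With that formulation, applying the hypothesis to $a_1$ and its one-step reduct $w$ legitimately yields $w \to^{k-2} n$, and the rest of your argument goes through unchanged (your case analysis, the use of the diamond property, and the final application of the hypothesis to $b_1$ are all fine). A second, minor point: you should also note that the arbitrary competing reduction has length $m \geq 1$, which holds because $a \to a_1$ shows $a$ is not a normal form.
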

\begin{proof}
This follows from the following claim:
\begin{quote}
Let $t$ be an element having normal form $n$. Then every reduct
of $t$ has normal form $n$, and every $\to$-reduction from $t$ to
$n$ has the same number of steps.
\end{quote}
We prove this by induction on $k$, where $k$ is the number of
steps of a reduction from $t$ to $n$.

If $k=0$ then $t$ is a normal form, and the lemma holds.

If $k > 0$ assume $ t \to t_1 \to^{k-1} n$ and $t \to^p u$; we
will prove that $u \to^q n$ for $q$ satisfying $p+q = k$.
Then indeed the lemma follows by applying this claim both for
$u$ being the arbitrary reduct of $t$ and for $u = n$.

For $p=0$ the claim trivially holds, so assume $p > 0$ and
$t \to t_2 \to^{p-1} u$. In case $t_1 = t_2$ we apply the
induction hypothesis to $t_1 = t_2$ and we are done. In the
remaining case by the diamond property there exists $v$ such
that $t_1 \to v$ and $t_2 \to v$. By applying the induction
hypothesis to $t_1$ we conclude $v \to^{k-2} n$, yielding
$t_2 \to^{k-1} n$. Next we apply the induction hypothesis to $t_2$,
so $u \to^* n$ and every reduction from $t_2$ to $n$ has length
$k-1$, $p+q = k$, concluding the proof.
\end{proof}

An orthogonal TRS is said to be {\em variable 
preserving}\footnote{Some texts have a weaker notion
of {\em variable preserving}, but our version is more suitable for
investigating reduction lengths.}
if for every rule $\ell \to r$ every variable occurring in $\ell$
occurs exactly once in $r$.
For instance, the TRS consisting of the two rules for $\pl$ as
given in the introduction, is variable preserving.

\begin{lem}
\label{lembas}
Let $R$ be a variable preserving orthogonal TRS. Then 
the relation $\to_R$ on the set of all terms
satisfies the diamond property.
\end{lem}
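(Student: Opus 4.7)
The plan is the standard parallel-moves argument for orthogonal systems, adapted to close the diamond in a single step on each side. Let $s \to t$ and $s \to u$ with $t \neq u$, where the two steps contract redexes at positions $p_1$ and $p_2$ using rules $\ell_1 \to r_1$ and $\ell_2 \to r_2$ respectively. I would analyse the relative position of $p_1$ and $p_2$: by orthogonality (left-linearity together with absence of critical pairs), either $p_1$ and $p_2$ are parallel, or one is a proper prefix of the other and the lower redex then lies strictly below a variable position of the outer left-hand side. The excluded case of equal positions with the same rule is ruled out by the assumption $t \neq u$, since orthogonality forbids two distinct rules to be applicable at the same position.

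In the parallel case the two contractions act at disjoint subterms, so each of $t$ and $u$ still contains the other redex unchanged; performing that remaining step on either side yields a common reduct $w$ in exactly one step. In the nested case, say $p_1 < p_2$ with $p_2 = p_1 \cdot q' \cdot q''$ and $\ell_1|_{q'} = x$ for a variable $x$, the step at $p_1$ moves the subterm matched to $x$ (which still houses the inner redex at depth $q''$) into the position of $x$ inside $r_1$. Here the variable preserving hypothesis is essential: since $x$ occurs exactly once in $r_1$, the inner redex is neither duplicated nor erased by the outer step, so a single further contraction produces the desired $w$. Conversely, the inner step only changes the subterm at $p_1 \cdot q'$ matching the variable $x$, leaving the $\ell_1$-pattern at $p_1$ intact, so the outer step can be performed in one move and the result equals $w$ by a straightforward check on substitutions.

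The main obstacle is the structural claim used to set up the nested case, namely that the lower redex must sit properly below a variable position of $\ell_1$: left-linearity and the absence of critical pairs are both needed to exclude any overlap of the inner redex with the rigid symbols of $\ell_1$. Once this is established, the closure in exactly one step each is an immediate consequence of the fact that every variable of $\ell$ appears exactly once in $r$, which is precisely the role of variable preservation in this lemma.
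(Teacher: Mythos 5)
Your proposal is correct and follows essentially the same route as the paper: a case split into parallel redexes (closed trivially) and nested redexes, where orthogonality forces the inner redex below a variable occurrence of the outer left-hand side, and the exactly-once occurrence of that variable in the right-hand side guarantees closure in a single step on each side. The paper phrases the nested case via a context decomposition $\ell = D[x]$, $r = E[x]$ and substitutions $\sigma,\tau$ differing at $x$ only, which is just a notational variant of your position-based argument.
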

\begin{proof}
Similar to the proof of the well-known critical pair lemma 
(see e.g., \cite{terese}, Lemma 2.7.15) we do a case analysis on the redex
patterns. If a term $t$ has one-step reductions to two distinct terms
$u$ and $v$, then by orthogonality one of the following cases
holds:
\begin{enumerate}
\item (the redexes are parallel) $t = C[\ell_1^\sigma, \ell_2^\tau]$
and $u = C[r_1^\sigma, \ell_2^\tau]$
and $v = C[\ell_1^\sigma, r_2^\tau]$, for a two-hole context $C$,
substitutions, $\sigma, \tau$ and rules $\ell_i \to r_i$ in $R$ for $i
= 1,2$,
\item (one redex is above the other) $t = C[\ell^\sigma]$ and
$u = C[r^\sigma]$ and $v = C[\ell^\tau]$, for a rule $\ell \to r$ in
$R$ and a context $C$, and substitutions $\sigma, \tau$ 
such that $x^\sigma \to_R x^\tau$ for some variable $x$ occurring 
in $\ell$, and $y^\sigma = y^\tau$ for all variables $y$ distinct from $x$,
\end{enumerate}
or the same in which $u$ and $v$ are swapped.
For both cases we have to find $w$ such that $u \to_R w$ and $v \to_R w$.
For case (1) this holds by choosing $w = C[r_1^\sigma, r_2^\tau]$.

For case (2) write $\ell = D[x]$. Since $R$ is left-linear, $x$ does
not occur in $D$. Since $R$ is variable preserving, we can write $r =
E[x]$ for a context $E$ not containing $x$. Now we obtain
\[u = C[r^\sigma] = C[ E^\sigma [x^\sigma]] \to_R C[ E^\sigma [x^\tau]]
\; \mbox{ and } \; 
v = C[\ell^\tau] \to_R C[ r^\tau] = C[E^\tau [x^\tau]].\]
Since $x$ does not occur in $E$ we have $E^\tau = E^\sigma$, by which
$w = C[E^\tau[x^\tau]] = C[E^\sigma [x^\tau]]$ satisfies the
requirements for $w$.
\end{proof}

As a direct consequence of Lemmas \ref{cordiam} and \ref{lembas} 
we conclude that
for all terms with respect to the two $\pl$ rules from
the introduction the number of steps to reach the normal form is
independent of the strategy. This does not hold any more for the
full system also containing the rules for $\ml$. For instance,
the term $\ml(\fz,\pl(\fz,\fz))$ admits the following two reductions
to normal form having lengths one and two, respectively: 
\[ \ml(\fz,\pl(\fz,\fz)) \to_R \fz, \;\; 
\ml(\fz,\pl(\fz,\fz)) \to_R \ml(\fz,\fz) \to_R \fz.\]

However, here the starting term $\ml(\fz,\pl(\fz,\fz))$ is not {\em
basic}: it contains more than one {\em defined symbol}.
A symbol is called a {\em defined symbol} if it occurs as the
root of the left-hand side of a rule. Similar as in texts on
run-time complexity like \cite{AM10} we
define:
\begin{quote}
A term is defined to be {\em basic} if the root is a defined symbol,
and it is the only defined symbol occurring in the term.
\end{quote}

We will prove that for basic terms like terms of the shape 
$\ml(\fs^m(\fz),\fs^n(\fz))$ every reduction to its normal form
$\fs^{mn}(\fz)$ has the same length. 

\begin{thm}
\label{thmmain}
Let $R$ be an orthogonal  TRS over $\Sigma$, and $\Sigma' \subseteq
\Sigma$, such that
every rule $\ell \to r$ of $R$ is of one of the following shapes:
\begin{itemize}
\item $\ell \to r$ is variable preserving, and neither $\ell$
nor $r$ contain symbols from $\Sigma'$,
\item the root of $\ell$ is in $\Sigma'$, and 
for every symbol in $r$ 
from $\Sigma'$ the arguments of this symbol do not contain
defined symbols.
\end{itemize}
Then any two reductions of a basic term to normal form have the
same length.
\end{thm}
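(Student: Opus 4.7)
The approach is to apply Lemma~\ref{cordiam} to $\to_R$ restricted to the set $T$ of reducts of basic terms; the main task is then to verify the diamond property on $T$.

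First I would establish an invariant preserved by $\to_R$ starting from any basic term: every occurrence of a $\Sigma'$-symbol in a term $t \in T$ has arguments that contain no defined symbols. For a basic term $f(t_1,\ldots,t_n)$ this is immediate, since the $t_i$ are defined-symbol-free by definition, so whether or not $f \in \Sigma'$, every $\Sigma'$-occurrence has defined-symbol-free arguments. For the inductive step one checks preservation under each rule type. A type~(a) rewrite introduces no new $\Sigma'$-symbols in the contracted subterm, and the $\Sigma'$-occurrences appearing through substituted variables come from subterms that were already safe by the inductive hypothesis. A type~(b) rewrite replaces $\ell^\sigma$ by $r^\sigma$, where $\ell^\sigma$ has $\Sigma'$-root, so by the invariant on the pre-step term $\sigma$ substitutes variables by defined-symbol-free subterms; combined with the hypothesis that every $\Sigma'$-symbol in $r$ already has defined-symbol-free arguments, this ensures $r^\sigma$ respects the invariant.

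Next I would prove the diamond property for $\to_R$ on $T$. Given $t \in T$ with $t \to u$ and $t \to v$ distinct, orthogonality gives the same dichotomy as in Lemma~\ref{lembas}: either the two redexes are parallel, or one sits strictly inside the other. The parallel case is handled exactly as there, by contracting both redexes. In the nested case, suppose the outer redex uses a rule $\ell_1 \to r_1$ at position $p$ and the inner redex sits strictly below $p$. The outer rule must then be of type~(a): if it were type~(b), the outer redex would have $\Sigma'$-root whose arguments, by the invariant, contain no defined symbols, contradicting that the inner redex has a defined root and lies inside those arguments. Hence the outer rule is variable preserving, and the argument of Lemma~\ref{lembas} completes the diamond.

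Once the diamond property holds for $\to_R$ on $T$, Lemma~\ref{cordiam} delivers the conclusion: every basic term lies in $T$, and all of its reductions to normal form (if any) have the same length. The main subtlety I expect is the invariant-preservation step for type~(b) rules, where one must track carefully how the substitution $\sigma$ carries defined-symbol-freeness from the matched arguments into the $\Sigma'$-positions inside $r$; the side condition on type~(b) rules is designed exactly to make this go through.
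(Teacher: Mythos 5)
Your proposal is correct and follows essentially the same route as the paper: the same invariant on reducts of basic terms (that arguments of $\Sigma'$-symbols stay free of defined symbols), proved by induction on the reduction, then the diamond property on this restricted set via the observation that in the nested case the outer redex must use a variable preserving rule, and finally Lemma \ref{cordiam}. No gaps.
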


\begin{proof}
First we prove the following claim.
\begin{quote}
{\bf Claim:}
If $t$ is a basic term and $t \to_R^* u$, then 
for every symbol from $\Sigma'$ in $u$ the arguments 
of this symbol do not contain defined symbols.
\end{quote}
We prove this claim by induction on the number of steps in the
reduction $t \to_R^* u$. If this number is 0, then $u = t$ and the
claim follows from the definition of basic term.

If this number is $> 0$, then $t \to_R^r u' \to_R u$, where by
induction hypothesis we assume that the claim holds for $u'$. 
For every subterm of $u$ having its root in $\Sigma'$, we have to
prove that this subterm does not contain other defined symbols.
For every such subterm that also occurs in $u'$ this holds by the
induction hypothesis. The only way such a subterm may occur in $u$ but
not in $u'$ is that it is created by replacing $\ell^\sigma$ by 
$r^\sigma$ for some substitution $\sigma$ and a rule $\ell
\to r$ of the second type: the root of $\ell$ is in $\Sigma'$ and
arguments of symbols from $\Sigma'$ in $r$ do not contain defined
symbols. Since $\ell^\sigma$ is a subterm of $u'$ and the property
holds for $u'$ by induction hypothesis, for all variables $x$ in $\ell$ 
the term $x^\sigma$ does not contain defined symbols.
Hence every subterm of $r^\sigma$ having its root in $\Sigma'$
does not contain other defined symbols, concluding the proof of the
claim.

Next we show that the relation $\to_R$ restricted to terms $u$
obtained by $t \to_R^* u$ for some basic term $t$,
satisfies the diamond property.
To prove this, let such a term $u$ both rewrite to $v$ and to
$w$, $v \neq w$. In case the redexes are parallel, then the
diamond property is easily concluded as in the proof of Lemma
\ref{lembas}, case 1.

In the remaining case one
redex is above the other. Due to the claim and the shape of
the rules, one of the following cases holds:
\begin{itemize}
\item The roots of both redexes are not in $\Sigma'$. Then both
reduction steps are with respect to variable preserving rules,
and the diamond property follows from Lemma \ref{lembas}.
\item The root of the innermost (of the two) redexes is in
$\Sigma'$, but the root of the outermost redex is not. Then the
reduction with respect to the outermost redex is variable
preserving, by which the diamond property can be concluded.
\end{itemize}
In all cases the diamond property of $\to_R$ restricted reducts $u$ 
of a basic term $t$ can be concluded. Now the
theorem follows from Lemma \ref{cordiam}.
\end{proof}

Indeed now by Theorem \ref{thmmain} we can conclude that 
with respect to the TRS with $\pl$ and $\ml$ as given
in the introduction for every basic term of the shape 
$\ml(\fs^m(\fz),\fs^n(\fz))$ every reduction to its normal form
$\fs^{mn}(\fz)$ has the same length: let $\Sigma'$ consist of the
single symbol $\ml$. Then the two rules for $\pl$ satisfy the
condition for rules of the first type, and the two rules for 
$\ml$ satisfy the condition for rules of the second type, by
which the claim follows from Theorem \ref{thmmain}.

As another example consider the Fibonacci function $\fib$
defined by the rules
\[ \begin{array}{rclrcl}
\pl(\fz,x) & \to & x &
\fib(\fz) & \to & \fz \\
\pl(\fs(x),y) & \to & \fs(\pl(x,y)) \hspace{1cm} &
\fib(\fs(\fz)) & \to & \fs(\fz) \\
&&& \fib(\fs(\fs(x))) & \to & \pl(\fib(x),\fib(\fs(x))). \end{array} \]
Choosing $\Sigma'$ to consist only of the symbol $\fib$, all
requirements of Theorem \ref{thmmain} hold, so we conclude
that for every $k$ every reduction of $\fib(\fs^k(\fz))$ to normal
form has the same length.

For another standard recursive function, the factorial, it does not 
hold that distinct reductions of basic terms to normal forms have the
same length. By extending our basic rules for $\pl$ and $\ml$ from the
introduction by the rules for factorial
\[ \begin{array}{rcl}
\ffact(\fz) & \to & \fs(\fz)  \\
\ffact(\fs(x)) & \to & \ml(\fs(x),\ffact(x)) \end{array} \]
the basic term $\ffact(\fs(\fz))$ admits reductions of lengths both 5
and 6 to its normal form $\fs(\fz))$, caused by the fact that after
\[ \ffact(\fs(\fz)) \to \ml(\fs(\fz),\ffact(\fz)) \to 
\pl(\ml(\fz,\ffact(\fz)),\ffact(\fz)) \]
the subterm $\ml(\fz,\ffact(\fz))$ can be rewritten to $\fz$ in one
step, but also in two steps via $\ml(\fz,\fs(\fz))$. Indeed 
Theorem \ref{thmmain} does not apply here since $\ml$ should be in
$\Sigma'$ as before, and also $\ffact \in \Sigma'$ since $\ml$ occurs in 
the right-hand side of a rule for $\ffact$. But then this rules
violates the conditions: $\ffact(x)$ occurs as an argument of the
symbol $\ml$ in this right-hand side.

The converse of Theorem \ref{thmmain} does not hold, not even for
terminating orthogonal exhaustive constructor systems. Here 
exhaustive (in case of termination equivalent to {\em sufficiently complete}) 
means that for every defined symbol applied to ground 
constructor forms at least one rule is applicable, by which all 
ground normal forms are constructor terms. 

For instance, consider the TRS consisting of the three rules 
\[ \ff(\fz) \to \fz, \;\;\; \ff(\fs(x)) \to \fs(\fz), \;\;\;
\fg(x) \to \ff(\ff(x))). \]
As $\ff$ and $\fg$ are defined symbols, the only basic terms are of
the shape $\ff(\fs^n(x))$, $\ff(\fs^n(\fz))$, 
$\fg(\fs^n(x))$ or $\fg(\fs^n(\fz))$, for $n \geq 0$. All of these
terms only admit one single reduction to normal form of at most three
steps, as is easily checked by applying the rules.
So any two reductions of a basic term to normal form have the
same length. However, this can not be concluded from Theorem
\ref{thmmain} since this TRS does not satisfy its conditions:
the rule $\ff(\fs(x)) \to \fs(\fz)$ is not
variable preserving, so for satisfying the conditions 
it should hold $\ff \in \Sigma'$. But then the third
rule does not satisfy the conditions as there is a nested occurrence of
$\ff$ in the right-hand side. 

\section{Binary arithmetic}
\label{secbin}
For more efficient computation of numbers it is natural to 
exploit binary notation, in which the size of the representation
is logarithmic rather than linear in the value of the number, and in
which basic arithmetic can be executed by rewriting in a complexity 
that is polynomial in the size of this representation, so logarithmic 
in the values of the numbers. Although the general idea is folklore,
there is not a single fixed standard. Here we present a
straightforward way to proceed for implementing binary arithmetic by
rewriting. It is related to the non-orthogonal system from \cite{WZ95} for
$n$-ary arithmetic for arbitrary $n$ for all integers. By restricting to 
$n=2$ and positive integers, here we succeed in presenting an orthogonal system.

In standard binary notation positive integers can be seen to be
uniquely composed from a constant $\fo$ representing value 1, and
two unary operators $\az$ and $\ao$, where $\az$
means putting a 0 behind the number, by which its value is
duplicated, and $\ao$
means putting a 1 behind the number, by which its value $x$ is
replaced by $2x+1$. 

Every positive integer has a unique representation as a ground
term over these three symbols $\fo$, $\az$ and $\ao$, corresponding 
to the usual binary notation
in which a postfix notation for $\az$ and $\ao$ is used. 
For instance, the number 
29 is $11101$ in binary notation, and is written as $\fo \ao \ao
\az \ao$ as a postfix
ground term. Since in term rewriting it is more standard to use prefix
notation rather than postfix notation, here we choose to switch to 
prefix notation. So
instead for 29 we write $\ao(\az(\ao(\ao(\fo))))$.
Introducing a constant $\fz$ would violate unicity, that is why we 
restrict to positive integers. 

In order to express addition and multiplication in this notation the
successor $\scp$ is needed as an additional operator, having 
rewrite rules
\[ \begin{array}{rcl}
\scp(\fo) & \to & \az(\fo) \\
\scp(\az(x)) & \to & \ao(x) \\
\scp(\ao(x)) & \to & \az(\scp(x)) \\
\end{array} \]
Now we can express addition:
\[ \begin{array}{rclrcl}
\plp(\fo,x) & \to & \scp(x) &
\plp(\az(x),\az(y)) & \to & \az(\plp(x,y)) \\
\plp(\az(x),\fo) & \to & \ao(x) &
\plp(\az(x),\ao(y)) & \to & \ao(\plp(x,y)) \\
\plp(\ao(x),\fo) & \to & \az(\scp(x))  \hspace{1cm} &
\plp(\ao(x),\az(y)) & \to & \ao(\plp(x,y)) \\
&&& \plp(\ao(x),\ao(y)) & \to & \az(\scp(\plp(x,y))) 
\end{array} \]
Indeed now for every ground term composed from $\fo, \az, \ao,
\scp, \plp$ containing at least one symbol $\scp$ or $\plp$ a
rule is applicable. So the ground normal forms are the ground
terms composed from $\fo, \az, \ao$, exactly being the binary
representations of positive integers. 
Since the TRS is easily proved to be
terminating, e.g., by recursive path order, every ground term
will reduce to such a ground normal form, being the binary
representations of a positive integer. As by every rule the
numeric value of the term is preserved, this TRS serves for
computing the binary value of any ground term considered so far.

Surprisingly, it is very simple to extend this system to
multiplication, by using the fresh symbol $\mlp$ for
multiplication and introducing the following rules
\[ \begin{array}{rcl}
\mlp(\fo,x) & \to & x \\
\mlp(\az(x),y) & \to & \az(\mlp(x,y)) \\
\mlp(\ao(x),y) & \to & \plp(\az(\mlp(x,y)),y) \\
\end{array} \]

The above observations are also easily checked for the extended
TRS consisting of all rules presented so far: all rules preserve
values, and for
every ground term composed from $\fo, \az, \ao, \scp, \plp, \mlp$ 
containing at least one symbol $\scp$ or $\plp$ or $\mlp$ a
rule is applicable. So the ground normal forms are the ground
terms composed from $\fo, \az, \ao$, exactly being the binary
representations of positive integers. Also the extended TRS 
is easily proved to be terminating, e.g., by recursive path order, 
so every ground term
will reduce to such a ground normal form, being the binary
representations of a positive integer. So the extended TRS
also serves for executing multiplication.

Choosing $\Sigma'$ to consist of the single symbol $\mlp$ it is
easily checked that all conditions of Theorem \ref{thmmain} are
satisfied. So by Theorem \ref{thmmain} we conclude that for
using this TRS for computing the addition or multiplication of
two binary numbers, every reduction to normal form has the same
length. This observation is of great value: a lot of effort is
done in choosing the right rewriting strategy; this observation shows
that when binary arithmetic is only used for addition and
multiplication (which is often the case), strategy optimization is 
useless since all reductions to normal form are of the same complexity
as they all have the same length.

In \cite{Zarit} it is proved that for addition this reduction
length is linear in the size of the arguments, and for
multiplication it is quadratic in the size of the arguments, so
having the same complexity as the standard algorithms for binary
addition and multiplication. The TRSs as presented in
\cite{Zarit} have been the basis of the implementation of integer
number computation in the mCRL2 tool set \cite{mcrl2}.

In our rules for multiplication the recursion is only in the first
argument. For computing $2^n * m$ for any number $m$ this is very 
efficient: the result is obtained after $O(n)$ steps, independent of
the value of $m$. However, for computing $m * 2^n$ this is not
efficient at all: first $m$ will be decomposed and many copies of
$2^n$ will be created, finally giving the same result as the
computation of $2^n * m$. To bring more symmetry and to make the
computations of $m * 2^n$ and $2^n * m$ both be linear in $n$, we can
choose the rules for $\mlp$ to be
\[ \begin{array}{rcl}
\mlp(\fo,\fo) & \to & \fo \\
\mlp(\az(x),\fo) & \to & \az(x)) \\
\mlp(\ao(x),\fo) & \to & \ao(x)) \\
\mlp(\fo,\az(x)) & \to & \az(x)) \\
\mlp(\fo,\ao(x)) & \to & \ao(x)) \\
\mlp(\az(x),\az(y)) & \to & \az(\az(\mlp(x,y))) \\
\mlp(\az(x),\ao(y)) & \to & \az(\mlp(x,\ao(y))) \\
\mlp(\ao(x),\az(y)) & \to & \az(\mlp(\ao(x),y)) \\
\mlp(\ao(x),\ao(y)) & \to & \plp(\az(\mlp(x,\ao(y))),\ao(y)). 
\end{array} \]
In this way duplication of arguments and introducing $\plp$ is only
done if both arguments of $\mlp$ are of the shape $\ao(\cdots)$.
Although slightly more complicated, it is easy to check that
the full system consisting of the rules for $\scp$, $\plp$ and these
new rules for $\mlp$ 
\begin{itemize}
\item only rewrites terms to terms having the same value,
\item is terminating (again by recursive path order),
\item satisfies the conditions of Theorem \ref{thmmain} for $\Sigma' =
\{\mlp\}$, and
\item computes both $m * 2^n$ and $2^n * m$ in $O(n)$ rewrite steps,
independent of the value of $m$.
\end{itemize}
Hence by Theorem \ref{thmmain}, we conclude that 
for this improved system for binary arithmetic for every basic term
any two reductions to normal form have the same length.

\section{Conclusion}
\label{secconcl}
Basic terms are typical terms to be rewritten: a defined symbol
on top and constructor terms as arguments. We gave a criterion
for orthogonal TRSs by which all reductions of a basic term to
normal form have the same length, showing that the reduction
length is independent of the chosen strategy. This applies to
both addition and multiplication, both in unary and binary
notation. This result is surprising: it shows that for these basic
computations as they often occur in practice, all possible reduction
strategies have the same complexity.

In unary notation the Fibonacci function still satisfies our
criteria. However, for more complicated user defined functions
experiments based on a simple implementation show that 
reduction lengths of basic terms
are typically not strategy independent any more.

{\bf Acknowledgment}. We want to thank Evans Kaijage for
fruitful discussions on this topic and for doing some
experiments.

\bibliographystyle{eptcs}
\bibliography{ref}

\begin{thebibliography}{1}
\providecommand{\bibitemdeclare}[2]{}
\providecommand{\surnamestart}{}
\providecommand{\surnameend}{}
\providecommand{\urlprefix}{Available at }
\providecommand{\url}[1]{\texttt{#1}}
\providecommand{\href}[2]{\texttt{#2}}
\providecommand{\urlalt}[2]{\href{#1}{#2}}
\providecommand{\doi}[1]{doi:\urlalt{http://dx.doi.org/#1}{#1}}
\providecommand{\bibinfo}[2]{#2}

\bibitemdeclare{unpublished}{mcrl2}
\bibitem{mcrl2}
\bibinfo{author}{J.~F.~Groote \surnamestart et~al\surnameend}:
  \emph{\bibinfo{title}{The {mCRL2} tool set}}.
\newblock \bibinfo{note}{{\tt http://mcrl2.org/mcrl2/wiki/index.php/Home}}.

\bibitemdeclare{inproceedings}{AM10}
\bibitem{AM10}
\bibinfo{author}{Martin \surnamestart Avanzini\surnameend} \&
  \bibinfo{author}{Georg \surnamestart Moser\surnameend}
  (\bibinfo{year}{2010}): \emph{\bibinfo{title}{Closing the Gap Between Runtime
  Complexity and Polytime Computability}}.
\newblock In \bibinfo{editor}{Christopher \surnamestart Lynch\surnameend},
  editor: {\sl \bibinfo{booktitle}{Proceedings of the 21st International
  Conference on Rewriting Techniques and Applications}}, {\sl
  \bibinfo{series}{Leibniz International Proceedings in Informatics
  (LIPIcs)}}~\bibinfo{volume}{6}, \bibinfo{publisher}{Schloss
  Dagstuhl--Leibniz-Zentrum f\"ur Informatik}, \bibinfo{address}{Dagstuhl,
  Germany}, pp. \bibinfo{pages}{33--48}.

\bibitemdeclare{inproceedings}{O07}
\bibitem{O07}
\bibinfo{author}{V.~\surnamestart van Oostrom\surnameend}
  (\bibinfo{year}{2007}): \emph{\bibinfo{title}{Random Descent}}.
\newblock In \bibinfo{editor}{F.~\surnamestart Baader\surnameend}, editor: {\sl
  \bibinfo{booktitle}{Proceedings of the 18th Conference on Rewriting
  Techniques and Applications (RTA)}}, {\sl \bibinfo{series}{Lecture Notes in
  Computer Science}} \bibinfo{volume}{4533}, \bibinfo{publisher}{Springer}, pp.
  \bibinfo{pages}{314--328}, \doi{10.1007/978-3-540-73449-9\_24}.

\bibitemdeclare{book}{terese}
\bibitem{terese}
\bibinfo{author}{\surnamestart Terese\surnameend} (\bibinfo{year}{2003}):
  \emph{\bibinfo{title}{Term Rewriting Systems}}.
\newblock \bibinfo{publisher}{Cambridge University Press}.

\bibitemdeclare{inproceedings}{T05}
\bibitem{T05}
\bibinfo{author}{Y.~\surnamestart Toyama\surnameend} (\bibinfo{year}{2005}):
  \emph{\bibinfo{title}{Reduction Strategies for Left-Linear Term Rewriting
  Systems}}.
\newblock In \bibinfo{editor}{A.~\surnamestart Middeldorp\surnameend},
  \bibinfo{editor}{V.~\surnamestart van Oostrom\surnameend},
  \bibinfo{editor}{F.~\surnamestart van Raamsdonk\surnameend} \&
  \bibinfo{editor}{R.~\surnamestart de~Vrijer\surnameend}, editors: {\sl
  \bibinfo{booktitle}{Processes, Terms and Cycles: Steps on the Road to
  Infinity:Essays Dedicated to Jan Willem Klop on the Occasion of His 60th
  Birthday}}, {\sl \bibinfo{series}{Lecture Notes in Computer Science}}
  \bibinfo{volume}{3838}, \bibinfo{publisher}{Springer}, pp.
  \bibinfo{pages}{198--223}, \doi{10.1007/11601548\_13}.

\bibitemdeclare{inproceedings}{WZ95}
\bibitem{WZ95}
\bibinfo{author}{H.~R. \surnamestart Walters\surnameend} \&
  \bibinfo{author}{H.~\surnamestart Zantema\surnameend} (\bibinfo{year}{1995}):
  \emph{\bibinfo{title}{Rewrite Systems for Integer Arithmetic}}.
\newblock In \bibinfo{editor}{J.~\surnamestart Hsiang\surnameend}, editor: {\sl
  \bibinfo{booktitle}{Proceedings of the 6th Conference on Rewriting Techniques
  and Applications}}, {\sl \bibinfo{series}{Lecture Notes in Computer Science}}
  \bibinfo{volume}{914}, \bibinfo{publisher}{Springer}, pp. \bibinfo{pages}{324
  -- 338}, \doi{10.1007/3-540-59200-8\_67}.

\bibitemdeclare{unpublished}{Zarit}
\bibitem{Zarit}
\bibinfo{author}{H.~\surnamestart Zantema\surnameend} (\bibinfo{year}{2003}):
  \emph{\bibinfo{title}{Basic arithmetic by rewriting and its complexity}}.
\newblock \bibinfo{note}{Available at \newline {\tt
  http://www.win.tue.nl/\~{}hzantema/aritm.pdf}}.

\end{thebibliography}
\end{document}